\DeclarePairedDelimiter{\ceil}{\lceil}{\rceil}
\newcommand{\BO}[1]{{O}\left(#1\right)}
\newcommand{\BT}[1]{{\Theta}\left(#1\right)}
\newtheorem{theorem}{Theorem}
\newtheorem{lemma}{Lemma}
\providecommand{\keywords}[1]
{
  \small	
  \textbf{\textit{Keywords---}} #1
}
\begin{document}

\title{Scalable Distributed Approximation of Internal Measures for Clustering Evaluation}

\author{Federico Altieri\\
University of Padova\\
 Padova,  Italy\\
\texttt{altieri@dei.unipd.it}\\
\And
Andrea Pietracaprina\\
University of Padova\\
 Padova,  Italy\\
\texttt{capri@dei.unipd.it}\\
\And
Geppino Pucci\\
University of Padova\\
 Padova,  Italy\\
\texttt{geppo@dei.unipd.it}\\
\And
Fabio Vandin\\
University of Padova\\
 Padova,  Italy\\
\texttt{vandinfa@dei.unipd.it}
}

\date{}

\maketitle




\begin{abstract} \small\baselineskip=9pt 
An important step in cluster analysis is the evaluation of the quality
of a given clustering through structural measures of goodness.
Measures that do not require additional information for their
evaluation (but the clustering itself), called internal measures, are
commonly used because of their generality. The most widely used
internal measure is the \textit{silhouette coefficient}, whose na\"ive
computation requires a quadratic number of distance calculations,
unfeasible for massive datasets. Surprisingly, there are no known
general methods to efficiently approximate the silhouette coefficient
of a clustering with rigorously provable high accuracy. In this paper,
we present the first scalable algorithm to compute such a rigorous
approximation for the evaluation of clusterings based on any metric
distances. Our algorithm approximates the silhouette coefficient
within a mere additive error $\BO{\varepsilon}$ with probability
$1-\delta$ using a very small number of distance calculations, for any
fixed $\varepsilon, \delta \in (0,1)$.  We also provide a distributed
implementation of the algorithm using the MapReduce model, which runs
in constant rounds and requires only sublinear local space at each
worker, thus making our estimation approach applicable to big data
scenarios.  An extensive experimental evaluation provides evidence
that our algorithm returns highly accurate silhouette estimates,
unlike competing heuristics, while running in a fraction of the time
required by the exact computation.
\end{abstract}

\keywords{Clustering, Silhouette, MapReduce, Randomization, PPS}

\section{Introduction}
\label{sec:intro}

Clustering is a fundamental primitive for the unsupervised analysis of datasets, and finds applications in a number of areas including pattern recognition, bioinformatics, biomedicine, and data management~\cite{AggarwalR13}. In its more general definition, clustering requires to identify groups of elements where each group exhibits high similarity among its members, while elements in different groups are dissimilar. 
Starting from this common definition, 
several algorithms have been proposed to identify clusters in a dataset~\cite{HennigMMR15}, often formalizing clustering as an optimization problem (based on a cost function). The resulting optimization problems are usually computationally hard to solve, and algorithms providing rigorous approximations are often sought in such cases. More recently, the focus has been on developing efficient methods that scale to the massive size of modern datasets~\cite{AwasthiB15,MalkomesKCWM15,BargerF16,CeccarelloFPPV17,CeccarelloPP19,MazzettoPP19} while still providing rigorous guarantees on the quality of the solution.

A common step after clustering has been performed is \emph{clustering evaluation} (sometimes called clustering \emph{validation}). Clustering evaluation usually employs an evaluation measure capturing the goodness of the structure of a clustering . Evaluation measures are classified into \emph{unsupervised} or \emph{internal} measures, which do not rely on external information, and \emph{supervised} or \emph{external} measures, which assess how well a clustering matches the structure defined by external information \cite{TanSK06}. While external measures are useful only when additional external knowledge regarding the cluster structure of the data is available, internal measures find applications in every scenario.

The most commonly used internal measure for clustering evaluation is
the \emph{silhouette coefficient}~\cite{Rousseeuw1987} (for brevity,
called \emph{silhouette} in this paper). 
The silhouette of a
clustering is the average silhouette of all elements in the clusters,
and, in turn, the silhouette $s(e)$ of an element $e$ in some cluster
$C$ is defined as the ratio $(b(e)-a(e))/\max\{a(e),b(e)\}$, where
$a(e)$ is the average distance of $e$ from the other elements of $C$,
and $b(e)$ is the minimum average distance of $e$ from the elements of
another cluster $C'$. In other words, $s(e)$ provides and indication
to what extent $e$ is closer (on average) to elements in its cluster
$C$ than to elements in the ``closest'' cluster $C'\neq C$.
The use of the silhouette for clustering evaluation is suggested
by widely used data mining books~\cite{TanSK06,han2011data}, and has
found application in several important areas~\cite{hossain2007gdclust,
  ngefficient, sellam2016blaeu, wiwie2015comparing}.  The na\"ive
computation of the silhouette for a clustering of $n$ elements
requires $\BT{n^2}$ distance calculations, which is unfeasible for
massive dataset that require distributed clustering
solutions~\cite{ene2011fast,feldman2011unified,bahmani2012scalable,balcan2013distributed,MalkomesKCWM15}.
Surprisingly, while several methods have been proposed to efficiently
cluster large datasets with rigorous guarantees on the quality of the
solution, there are no methods to efficiently approximate the
silhouette featuring \emph{provably high accuracy}.

\subsection{Related Work}
\label{sec:related}
While the silhouette is one of the most popular internal measures for
clustering evaluation \cite{MoulaviJCZS14,XiongL15,TomasiniEBM16}, the quadratic
complexity of the na\"{i}ve exact calculation makes its use impractical
for clusterings of very large datasets. For this reason, some attempts
have been made to propose variants that are faster to compute, or to
simplify its calculation in some special cases.  

Hruschka et al.~\cite{Hruschka2004}  present the \emph{simplified
  silhouette} for the evaluation of clusterings obtained on the basis
of objective functions involving squared Euclidean distances (e.g.,
$k$-means clusterings \cite{AggarwalR13}).  The simplified silhouette
is a variant of the silhouette, where for each element $e$
in a cluster, the aforementioned quantities $a(e)$ and $b(e)$ are
redefined, respectively,  as the (squared) distance between $e$ and the
centroid of its cluster, and of the closest centroid of another
cluster. In this fashion, the complexity of the whole computation
reduces to $\BO{nk}$. 
While Hruschka et al.~\cite{Hruschka2004}
and Wang et al.~\cite{Wang2017}
provide empirical evidence that the simplified silhouette can be an
effective evaluation measure for clusterings returned by Lloyd's
algorithm \cite{Lloyd82}, there is no evidence of its effectiveness
for other types of clusterings (e.g, clusterings based on other
distance functions) and, moreover,
the discrepancy between the original silhouette  and the
simplified silhouette can grow very large.

Frahling and Sohler \cite{FrahlingS08} proposed an optimization heuristic for speeding-up the computation of the exact silhouette
for clusterings based on Euclidean distances. For each element $e$ of a cluster $C$, while the term $a(e)$ is computed according to its definition, in
an attempt to reduce the operations needed to compute the term $b(e)$,
the heuristic first determines the average distance $d(e,C')$ between
$e$ and the elements of the cluster $C' \neq C$, whose centroid is
closest to $e$, and then sets $b(e)=d(e,C')$ in case the distance
between $e$ and the centroid of any other cluster $C'' \not\in
\{C,C'\}$ is larger than or equal to $d(e,C')$, since in this case
there is no need to compute any other average distance $d(e, C'')$.
However, when this is not the case, $b(e)$ must be computed according
to its definition and
its worst case complexity remains clearly quadratic.

The Apache Spark programming framework\footnote{\url{https://spark.apache.org/}} provides
optimized methods for computing the silhouette
of clusterings under $d$-dimensional squared Euclidean distances and under
one formulation of cosine distance.  Indeed, in these specific cases,
simple algebra suffices to show that precomputing, for each of the $k$
clusters, a limited number of values dependent on the coordinates of
the cluster's points, yields a fully parallelizable
algorithm featuring $O(nkd)$ work.  However, using squared distance measures to compute
the silhouette tends to
push the measure closer to $1$ compared to linear distances, thus
amplifying positive and negative scores.

In our algorithm, we employ a \emph{Probability Proportional to Size} (PPS)
sampling scheme that samples each element with probability proportional to a ``size'' measure.
The use of PPS sampling has been pioneered in the context of distance query processing and successfully applied to computing closeness centralities in graphs~\cite{Chechik2015}.  In the context of clustering, PPS has been used to obtain samples whose clustering cost approximates the clustering cost of the whole dataset, with guarantees on the quality of the approximation~\cite{cohen2018clustering}. To the best of our knowledge, prior to our work, the use of PPS for efficient clustering evaluation had not been explored.

\subsection{Our Contributions}
\label{sec:contributions}
In this work, we target the problem of the efficient computation of an
accurate estimate of the silhouette of a given clustering under
general metric distances. 
In this regard, our contributions are:
\begin{itemize}
\item 
We develop the first efficient, sampling-based algorithm for
estimating the silhouette with provable approximation guarantees. 
For any fixed $\varepsilon, \delta \in
(0,1)$, our algorithm approximates the silhouette of a $k$-clu\-ste\-ring
of $n$ elements within an additive error
$4\varepsilon/(1-\varepsilon)$ with probability at least $1-\delta$,
using only $\BO{nk\varepsilon^{-2}\log (nk/\delta)}$ distance
computations, which constitutes a dramatic improvement over the
$\BT{n^2}$ distance computations required by the na\"ive exact
algorithm.
\item 
We provide a distributed implementation of our algorithm using the
Map-Reduce framework \cite{DeanG08,PietracaprinaPRSU12},
which runs in constant rounds, and requires only sublinear space at
each worker.
\item 
\sloppy
We perform an extensive suite of experiments on real and
synthetic datasets to assess the effectiveness and efficiency of
our algorithm, and to compare it with known approaches 
for fast silhouette computation/approximation.
The experiments show that, in practice, our algorithm provides
silhouette estimates featuring very low absolute error (less than $0.01$ in most
cases) with small variance ($< 10^{-3}$) using a very small fraction of
the distance computations required by the exact calculation. Moreover,
the estimates returned by our algorithm are far superior in quality to 
those returned by   a  na\"ive  yet natural heuristics based on uniform sampling, 
or by  the simplifed silhouette~\cite{Hruschka2004} under general distances.
We also show that the MapReduce implementation of our algorithm
enables the estimation of the silhouette  for clusterings
of massive datasets (e.g., 1 billion elements) for which the exact
computation is out of reach.
\end{itemize}

Our algorithm addresses the problematic issues posed by existing approaches by providing an efficiently computable and provably accurate approximation of the silhouette,  allowing the use of the silhouette coefficient for very large datasets, for which its exact computation is unfeasible, and without the need to recur to other efficiently computable surrogates (such as the simplified silhouette coefficient) which in many cases  provide estimates so divergent from the exact silhouette  that they may lead to very different conclusions in the aforementioned scenarios of application.

In addition, while previously known approaches to
efficiently compute or approximate the silhouette have been developed
for specific distance functions, namely squared Euclidean and cosine
distances, our algorithm provides provably accurate silhouette
estimations for clusterings based on \emph{any} metric distance. Finally,
our algorithm can be generalized to compute rigorous approximations of other 
internal measures, such as cohesion and separation.

\subsection{Organization of the paper}
The rest of the paper is structured as
follows. Section~\ref{sec:methods} contains the description of our
proposed strategy for silhouette estimation, its accuracy and
performance analysis, and the MapReduce implementation.
Section~\ref{sec:exp} reports the results of an extensive suite of
experiments performed to evaluate the effectiveness and scalability of
our silhouette estimation algorithm on synthetic and real
datasets. Section~\ref{sec:conclusions} concludes the paper with some
final remarks.
\section{Methods} 
\label{sec:methods}
Consider a metric space $U$ with distance function $d(\cdot,\cdot)$,
and let $V = \{e_1, \dots e_n\} \subseteq U$ be a dataset of $n$
elements. Let also $\mathcal{C} = \{C_1, \dots C_k\}$ be a
\emph{$k$-clustering} of $V$, that is, a partition of $V$ into $k$
disjoint non-empty subsets called \emph{clusters}. A popular measure
to assess clustering quality was introduced by Rousseeuw in 1987
\cite{Rousseeuw1987}. Specifically, 
the \textit{silhouette of an element} $e \in V$ belonging to some cluster $C$, is defined as
\[
s(e) = \frac{b(e)-a(e)}{\max\{a(e), b(e)\}},
\]
where 
\[
a(e) = \frac{\sum_{e' \in C}d(e, e')}{|C|-1},
\;\;\;\;\;\;
b(e) = \min_{C_j\neq C}\frac{\sum_{e' \in C_j}d(e, e')}{|C_j|}
\]
denote, respectively, the average distance of $e$ from the other
elements in its cluster, and the minimum average distance of $e$ from
the elements in some other cluster.  The \emph{silhouette of the
  clustering} $\mathcal{C}$ is the average silhouette of all the
elements of $V$, namely:
\begin{equation}
\label{eq:avgShil}
s(\mathcal{C}) = \frac{\sum_{i = 1}^{n}s(e_i)}{n}.
\end{equation}
From the above definitions it immediately follows that the values
$s(e)$, with $e \in V$, and $s(\mathcal{C})$ range in $[-1,1]$.  A
positive value for $s(e)$ implies that $e$ has been assigned to an
appropriate cluster, while a negative value implies that there might
be a better cluster where $e$ could have been placed. Therefore,
$s(e)$ can be interpreted as a measure of the quality of the
clustering from the perspective of element $e$.  In turn,
$s(\mathcal{C})$ provides a global measure of the quality of the whole
clustering, where a value closer to 1 indicates higher quality.  The
exact computation of $s(\mathcal{C})$ requires $\BO{n^2}$ distance
calculations, which is prohibitive when dealing with large
datasets. In the following subsection, we present a randomized
strategy to yield an estimate of $s(\mathcal{C})$, which is accurate
within a provable error bound, with sufficiently high probability, and
amenable to an efficient distributed computation.

\subsection{A Fast Algorithm for Silhouette Estimation}
\label{sec:alg}
Consider the estimation of the silhouette $s(\mathcal{C})$ for a
$k$-clustering $\mathcal{C} = \{C_1, \dots C_k\}$ of a set $V$ of $n$
elements from a metric space $U$.  For each $e \in V$ and $C_j \in
\mathcal{C}$, define
\[
W_{C_j}(e) = \sum_{e' \in C_j}d(e, e').
\]
\sloppy
Note that for an element $e$ of a cluster $C$,
the quantities
$a(e)$ and $b(e)$ in the definition of the silhouette $s(e)$ can be
rewritten as $a(e) = W_{C}(e)/(|C|-1)$ and $b(e) = \min_{C_j \neq
  C}\{W_{C_j}(e_i)/|C_j|\}$.  Building on this observation, our
approach to approximating $s(\mathcal{C})$ is based on estimating the
$W_{C_j}(e)$'s by exploiting the \emph{Probability Proportional to
  Size} (PPS) sampling strategy used in 
\cite{Chechik2015}. In particular, for each cluster $C_j \in
\mathcal{C}$, a suitable small sample $S_{C_j}$ is computed, and, for
every element $e$, the value of $W_{C_j}(e)$ is
approximated in terms of the distances between $e$ and the elements of
$S_{C_j}$, ensuring a user-defined error bound with high
probability. The elements of $S_{C_j}$ are chosen according to a
carefully designed probability distribution which privileges the
selection of elements that are distant from some suitable ``central''
element of the cluster. The details are given in what follows.

Consider a fixed error tolerance threshold $0 < \varepsilon < 1$ and a
probability $0 < \delta < 1$. Our algorithm, dubbed
\textsc{PPS-Silhouette}, consists of two \emph{steps}. In Step 1, for each cluster
$C_j \in \mathcal{C}$, the algorithm computes a sample $S_{C_j}$ of
expected size $t= \ceil{(c/2\varepsilon^2)\ln{(4nk/\delta)}}$
for a suitably chosen constant $c$, while in Step 2
the approximate values of the $W_{C_j}(e)$'s and, in
  turn, the approximations of the $s(e)$'s are computed through
  the $S_{C_j}$'s. More precisely, in Step 1, each cluster $C_j$ is
  processed independently.  If $|C_j| \leq t$, then $S_{C_j}$ is set
  to $C_j$, otherwise, \emph{Poisson sampling} is performed over $C_j$,
that is, each element $e \in C_j$ is included in $S_{C_j}$ independently with
  a suitable probability $p_e$. Probability $p_e$, for $e \in C_j$, is determined as follows:

\begin{itemize}
\item First, an initial sample $S^{(0)}_{C_j}$ is selected, again by 
Poisson sampling, where each $e \in C_j$ is included
in $S^{(0)}_{C_j}$ independently with probability 
$(2/|C_j|)\ln (2k/\delta)$. This initial sample will contain,
with sufficiently high probability,  the aforementioned ``central''
element of $C_j$, namely, one that is close to a majority of the elements of $C_j$, a property which is necessary to
enforce the quality of the final sample $S_{C_j}$;
\item For each $\bar{e} \in S^{(0)}_{C_j}$ the value $W_{C_j}(\bar{e})$
is computed, and, for each $e \in C_j$, $p_e$ is set to $\min\{1, t\gamma_e \}$,
where $\gamma_e$ is the maximum between $1/|C_j|$ (corresponding to
uniform sampling) and the maximum of the values $d(e,\bar{e})/W_{C_j}(\bar{e})$,
with $\bar{e} \in S^{(0)}_{C_j}$ (representing the relative contributions of
$e$ to the $W_{C_j}(\bar{e})$'s of the sample points). 
\end{itemize}
In Step 2, for each $e \in V$ and each cluster $C_j$,
the sample $S_{C_j}$ is used to compute the value
\[
\hat{W}_{C_j}(e) = \sum_{e' \in S_{C_j}}\frac{d(e, e')}{p_{e'}}
\]
which is an accurate estimator of $W_{C_j}(e)$,
as will be shown by the analysis. Once all these values have been
computed, then, for each $e \in V$ belonging to 
a cluster $C$ we compute the estimates
$\hat{a}(e) = \hat{W}_{C}(e)/(|C|-1)$ and 
$\hat{b}(e) = \min_{C_j \neq C} \{\hat{W}_{C_j}(e)/|C_j|\}$,
which are in turn used to estimate the silhouette  as
$\hat{s}(e) = 
(\hat{b}(e)-\hat{a}(e))/(\max\{\hat{a}(e), \hat{b}(e)\})$.
Finally, we estimate the silhouette of the whole clustering as
\begin{equation}
\label{eq:avgShilhat}
\hat{s}(\mathcal{C}) = \frac{\sum_{e \in V}\hat{s}(e)}{n}.
\end{equation}

\begin{algorithm}[htb] \label{code:algorithm}
\SetAlgoLined
\label{alg:SilhCalc}
\small
\textbf{Input}: clustering $\mathcal{C}=\{C_1,\dots,C_k\}$ of $V$, 
$\varepsilon, \delta \in (0,1)$ \\
\textbf{Output}: estimate $\hat{s}(\mathcal{C})$ of $s(\mathcal{C})$.

\vspace*{0.2cm} 
// Step 1: \emph{PPS sampling}\\[0.1cm]
$n \leftarrow |V|$\;
$t \leftarrow \ceil{\frac{c}{2\varepsilon^2}\ln{(4nk/\delta)}}$; 
// (\emph{expected sample size}) \\
\For{\upshape\textbf{each} cluster $C_j\in \mathcal{C}$} {
\lIf{$t \geq |C_j|$}{$S_{C_j} \leftarrow C_j$}
\Else {
$S^{(0)}_{C_j} \leftarrow$ Poisson sampling of $C_j$
with probability \\
\hspace*{0.9cm} $(2/|C_j|)\ln (2k/\delta)$\;
\lFor{\upshape\textbf{each} $\bar{e} \in S^{(0)}_{C_j}$} {
$W_{C_j}(\bar{e}) \leftarrow \sum_{e' \in C_j} d(\bar{e},e')$}
\For{\upshape\textbf{each} $e \in C_j$} {
$\gamma_e \leftarrow \max \{d(e,\bar{e})/W_{C_j}(\bar{e}) : \bar{e} \in S^{(0)}_{C_j}\}$\;
$\gamma_e \leftarrow \max \{\gamma_e,1/|C_j|\}$\;
$p_e \leftarrow \min\{1, t\gamma_e\}$\;
}
$S_{C_j} \leftarrow$ Poisson sampling of $C_j$ with probabilities 
\\ \hspace*{0.9cm} $\{p_e : e \in C_j\}$\;
}
}
\vspace*{0.2cm}
// Step 2: \emph{silhouette estimation} \\[0.1cm]
\For{\upshape\textbf{each} $e \in V$} {
Let $e$ belong to cluster $C$\;
\For{\upshape\textbf{each} cluster $C_j$} {
$\hat{W}_{C_j}(e) = \sum_{e' \in S_{C_j}} d(e, e')/p_{e'}$\;
}
$\hat{a}(e) \leftarrow \hat{W}_{C}(e)/(|C|-1)$\;
$\hat{b}(e) \leftarrow \min \{\hat{W}_{C_j}(e)/|C_j| : C_j \neq C\}$\; 
$\hat{s}(e) \leftarrow 
(\hat{b}(e)-\hat{a}(e))/(\max\{\hat{a}(e), \hat{b}(e)\})$\;
}
$\hat{s}(\mathcal{C}) \leftarrow \sum_{e \in V} \hat{s}(e)/n$\;
\Return $\hat{s}(\mathcal{C})$
\caption{\textsc{PPS-Silhouette}($V, \mathcal{C}, \varepsilon, \delta$)}
\end{algorithm}

\subsection{Analysis}
\label{sec:analysis}
In this section we show that, with probability $1-\delta$, the value
$\hat{s}(\mathcal{C})$ computed by \textsc{PPS-Silhouette}
approximates the true silhouette $s(\mathcal{C})$ within a small error
bound, expressed in terms of $\varepsilon$. The key ingredient
towards this goal, stated in the following theorem, is a probabilistic
upper bound on the relative error of the estimate $\hat{W}_{C_j}(e)$
with respect to true value $W_{C_j}(e)$.
\begin{theorem} 
\label{th:mainerror}
There is a suitable choice of the constant $c$ in the definition of
the expected sample size $t$ used by \textsc{PPS-Silhouette}, which
ensures that, with probability at least $1 -\delta$, for every element
$e$ and every cluster $C_j$, the estimate $\hat{W}_{C_j}(e)$  is such that
\[
\left|\frac{\hat{W}_{C_j}(e) - W_{C_j}(e)}{W_{C_j}(e)} \right| 
\leq \varepsilon.
\]
\end{theorem}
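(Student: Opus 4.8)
The plan is to treat $\hat{W}_{C_j}(e)$ as a Horvitz--Thompson estimator and establish sharp concentration around its mean, the entire difficulty being to show that the sampling probabilities $p_{e'}$ keep each summand uniformly small relative to the target $W_{C_j}(e)$. Writing $\hat{W}_{C_j}(e)=\sum_{e'\in C_j}(d(e,e')/p_{e'})X_{e'}$, where $X_{e'}$ indicates that $e'$ is drawn into the Poisson sample $S_{C_j}$, linearity of expectation immediately gives $\mathbb{E}[\hat{W}_{C_j}(e)]=\sum_{e'\in C_j}d(e,e')=W_{C_j}(e)$, so the estimator is unbiased and it remains only to bound its fluctuations.

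The first structural step concerns the ``central'' element. Call $\bar e\in C_j$ central if $W_{C_j}(\bar e)\le 2\bar W_j$, where $\bar W_j$ is the average of $W_{C_j}(\cdot)$ over $C_j$; by Markov's inequality at least half the elements of $C_j$ are central, and summing the triangle inequality $d(e',e'')\le d(e',e)+d(e,e'')$ over $e',e''\in C_j$ shows $\bar W_j\le 2W_{C_j}(e)$ for every $e$, whence $W_{C_j}(\bar e)\le 4W_{C_j}(e)$. Since each element enters $S^{(0)}_{C_j}$ independently with probability $(2/|C_j|)\ln(2k/\delta)$, the probability that $S^{(0)}_{C_j}$ misses all central elements is at most $(1-(2/|C_j|)\ln(2k/\delta))^{|C_j|/2}\le \delta/(2k)$, and a union bound over the $k$ clusters bounds the failure of this stage by $\delta/2$.

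Conditioned on a central $\bar e$ lying in $S^{(0)}_{C_j}$, I would prove the boundedness claim $d(e,e')/p_{e'}\le 9W_{C_j}(e)/t$ for every query point $e$ and every sampled $e'$ with $p_{e'}<1$. This is where the PPS design pays off: by the triangle inequality $d(e,e')\le d(e,\bar e)+d(\bar e,e')$, the term $d(\bar e,e')$ is controlled using $\gamma_{e'}\ge d(\bar e,e')/W_{C_j}(\bar e)$ together with $W_{C_j}(\bar e)\le 4W_{C_j}(e)$, while $d(e,\bar e)$ is controlled using $\gamma_{e'}\ge 1/|C_j|$ together with $|C_j|\,d(e,\bar e)\le W_{C_j}(e)+W_{C_j}(\bar e)\le 5W_{C_j}(e)$ (the triangle inequality summed over the cluster once more); combining gives $d(e,e')\le 9\gamma_{e'}W_{C_j}(e)$, and hence $d(e,e')/p_{e'}\le 9W_{C_j}(e)/t$ since $p_{e'}=t\gamma_{e'}$.

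Conditioned on the initial samples, which fix the $p_{e'}$, the random summands—those with $p_{e'}<1$—are independent, nonnegative, bounded by $M=9W_{C_j}(e)/t$, and of total variance at most $M\,W_{C_j}(e)$ (the terms with $p_{e'}=1$ are computed exactly and add no fluctuation), so a Bernstein (or multiplicative Chernoff) bound yields $\Pr[\,|\hat{W}_{C_j}(e)-W_{C_j}(e)|\ge\varepsilon W_{C_j}(e)\,]\le 2\exp(-\Omega(\varepsilon^2 t))$. Substituting $t=\lceil(c/2\varepsilon^2)\ln(4nk/\delta)\rceil$ and taking the constant $c$ large enough drives this below $\delta/(2nk)$, so a union bound over all $n$ elements and all $k$ clusters caps the failure of the estimation stage by $\delta/2$; adding the two failure probabilities gives the claimed $1-\delta$. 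I expect the boundedness claim of the third paragraph to be the main obstacle, since it must hold uniformly over all query points $e\in V$ (not merely over $C_j$) and hinges on the tiny initial sample containing an element that is simultaneously near-central for the whole cluster.
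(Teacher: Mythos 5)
Your proposal is correct, and it follows the same overall architecture as the paper's proof: (i) show the initial Poisson sample $S^{(0)}_{C_j}$ catches a good ``anchor'' element with probability at least $1-\delta/(2k)$ per cluster, (ii) conditioned on that, show each estimate $\hat{W}_{C_j}(e)$ concentrates within relative error $\varepsilon$ with probability at least $1-\delta/(2nk)$, and (iii) union bound both stages to $\delta/2+\delta/2$. Where you genuinely diverge is in the middle step. The paper defines the anchor via medians of distances (an element $e$ is \emph{well positioned} if $m(e)\le 2\min_{e'}m(e')$, with at least half the cluster qualifying by a ball-covering argument) and then delegates the entire concentration argument to ``an easy adaptation of Lemma 12'' of Chechik et al.; you instead define the anchor by $W_{C_j}(\bar e)\le 2\bar W_j$ (at least half qualify by Markov), derive $W_{C_j}(\bar e)\le 4\,W_{C_j}(e)$ uniformly over all query points $e\in V$ via the triangle inequality, prove the explicit per-term bound $d(e,e')/p_{e'}\le 9\,W_{C_j}(e)/t$, and close with Bernstein. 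Your route buys a self-contained proof with explicit constants and makes visible exactly why the PPS probabilities must be anchored at an element that is near-central \emph{for the whole cluster} even though the query $e$ may lie in a different cluster --- the point the paper's citation hides; the paper's route buys brevity and a direct match to the cited lemma. All the individual steps you state check out (the conditional independence of the second-stage indicators given $S^{(0)}_{C_j}$, the unbiasedness, the variance bound $M\,W_{C_j}(e)$, and the constant $c$ of order a few tens); the only omission is the trivial case $|C_j|\le t$, where the sample is the whole cluster and the estimate is exact, which the paper dispatches in a parenthetical.
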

\begin{proof}
The proof mimics the argument devised in \cite{Chechik2015}. 
Recall that 
$t = \ceil{\frac{c}{2\varepsilon^2}\ln{(4nk/\delta)}}$. 
Consider
an arbitrary cluster $C_j$ with more than $t$ elements
(in the case $|C_j| \leq t$ the theorem follows trivially). For an element
$e \in C_j$, let $m(e)$ denote the median of the distances from $e$ to all other
elements of $C_j$. Element $e$ is called \emph{well positioned} if
$m(e) \leq 2\min_{e' \in C_j} m(e')$. It is easy to see that at least
half of the elements of $C_j$ are well positioned. Hence, the initial
random sample $S^{(0)}_{C_j}$ will contain a well positioned 
element with probability
at least $(1-(2/|C_j|)\ln (2k/\delta))^{|C_j|/2} \geq 1-\delta/(2k)$.  
An easy adaptation of the proof of
\cite[Lemma 12]{Chechik2015}, shows that if $S^{(0)}_{C_j}$ contains a well
positioned element and $c$ is a suitable constant, then the Possion sample
$S_{C_j}$ computed with the  probabilities derived from $S^{(0)}_{C_j}$ is such
that 
$|(\hat{W}_{C_j}(e) - W_{C_j}(e))/ W_{C_j}(e)| \leq \varepsilon$,
with probability at least $1-\delta/(2nk)$.  By the union bound, it
follows that the probability that there exists a cluster $C_j$ such
that the initial sample $S^{(0)}_{C_j}$ does not contain a well positioned
element is at most $k\delta/(2k)=\delta/2$. Also, by conditioning on
the fact that for all clusters $C_j$ the initial sample $S^{(0)}_{C_j}$
contains a well positioned node, by using again the union bound we
obtain that the probability that there exists an element $e$ and a
cluster $C_j$ for which $|(\hat{W}_{C_j}(e) -
W_{C_j}(e))/W_{C_j}(e)| >\varepsilon$ is at most $nk \delta/(2nk)
= \delta/2$, which concludes the proof.
\end{proof}

From now on, we assume that the relative error bound stated in
Theorem~\ref{th:mainerror} holds for every element $e \in V$ and every
cluster $C_j$, an event which we will refer to as \emph{event
  $E$}. Consider an arbitrary element $e$, and let $\hat{s}(e)$ be the
estimate of the silhouette $s(e)$ computed by
\textsc{PPS-Silhouette}. The following key technical lemma,
establishes a bound on the absolute
error of the estimate.

\begin{lemma}
\label{lem:singlesilh}
If event $E$ holds, then 
$|\hat{s}(e)-s(e)| \le 4\varepsilon/(1-\varepsilon)$.
\end{lemma}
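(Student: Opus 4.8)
The plan is to bound $|\hat s(e)-s(e)|$ by first controlling the errors in the two building blocks $\hat a(e)$ and $\hat b(e)$, and then propagating these through the silhouette formula. Under event $E$, Theorem~\ref{th:mainerror} gives $|\hat W_C(e)-W_C(e)|\le \varepsilon W_C(e)$ for the home cluster $C$ and, simultaneously, the analogous bound for every other cluster $C_j$. Since $\hat a(e)=\hat W_C(e)/(|C|-1)$ and $a(e)=W_C(e)/(|C|-1)$ differ only by the common factor $1/(|C|-1)$, the relative error carries over verbatim: $|\hat a(e)-a(e)|\le \varepsilon\, a(e)$, equivalently $(1-\varepsilon)a(e)\le \hat a(e)\le (1+\varepsilon)a(e)$. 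The first mildly delicate point is $\hat b(e)$: it is a minimum over $C_j\neq C$ of quantities each of which has relative error at most $\varepsilon$. Because taking a minimum is $1$-Lipschitz and each term $\hat W_{C_j}(e)/|C_j|$ lies within a factor $(1\pm\varepsilon)$ of $W_{C_j}(e)/|C_j|$, the minimum inherits the same two-sided multiplicative bound, so $(1-\varepsilon)b(e)\le \hat b(e)\le (1+\varepsilon)b(e)$ as well. Thus both $\hat a(e)$ and $\hat b(e)$ are $(1\pm\varepsilon)$-approximations of $a(e)$ and $b(e)$.

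Next I would substitute these into $\hat s(e)=(\hat b(e)-\hat a(e))/\max\{\hat a(e),\hat b(e)\}$ and compare with $s(e)=(b(e)-a(e))/\max\{a(e),b(e)\}$. The cleanest route is algebraic: write $\hat a(e)=(1+\alpha)a(e)$ and $\hat b(e)=(1+\beta)b(e)$ with $|\alpha|,|\beta|\le\varepsilon$, and expand the difference $\hat s(e)-s(e)$. The numerator $\hat b(e)-\hat a(e)$ differs from $b(e)-a(e)$ by at most $\varepsilon(a(e)+b(e))$, while the denominator $\max\{\hat a(e),\hat b(e)\}$ differs from $\max\{a(e),b(e)\}$ by at most $\varepsilon\max\{a(e),b(e)\}$ and, crucially, is bounded below by $(1-\varepsilon)\max\{a(e),b(e)\}$. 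Using $a(e)+b(e)\le 2\max\{a(e),b(e)\}$ to control the numerator perturbation, these two effects combine to yield a bound of the form $4\varepsilon/(1-\varepsilon)$ on $|\hat s(e)-s(e)|$; the factor $4$ comes from the $2\max$ estimate on the numerator together with a $\le 2$ contribution from the denominator perturbation, and the $(1-\varepsilon)$ denominator comes from the lower bound on $\max\{\hat a(e),\hat b(e)\}$.

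I expect the main obstacle to be handling the $\max$ in the denominator rigorously rather than heuristically. Two subtleties lurk there. First, the arg-max can differ between the true and estimated quantities: it is possible that $a(e)>b(e)$ but $\hat a(e)<\hat b(e)$ when $a(e)$ and $b(e)$ are close. The standard way to finesse this is to avoid case-splitting on which term is larger and instead bound $|\max\{\hat a,\hat b\}-\max\{a,b\}|\le \varepsilon\max\{a,b\}$ directly, using that $\max$ is $1$-Lipschitz in the sup norm and that each argument has multiplicative error $\varepsilon$. Second, one must ensure the denominators are strictly positive so the ratios are well defined; this is where the assumption that clusters are non-empty (and the convention handling singleton clusters in the $a(e)$ denominator $|C|-1$) matters, and I would note that $a(e),b(e)>0$ whenever the relevant clusters contain at least the needed elements. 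Once the denominator is pinned down, the remaining steps are a routine triangle-inequality bookkeeping that I would not belabor.
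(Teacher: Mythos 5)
Your proposal is correct and takes essentially the same route as the paper: first establish the two-sided multiplicative bounds $(1-\varepsilon)x \le \hat x \le (1+\varepsilon)x$ for $x\in\{a(e),b(e),\max\{a(e),b(e)\}\}$, then propagate them through the ratio (your perturbation bookkeeping, done tightly, in fact gives a constant better than $4$, since $a(e)+b(e)+|b(e)-a(e)|=2\max\{a(e),b(e)\}$, but the conservative accounting still lands within the claimed $4\varepsilon/(1-\varepsilon)$). The one step that needs repair is the justification for $\hat b(e)$: ``$\min$ is $1$-Lipschitz'' only yields the additive bound $\varepsilon\max_{C_j\neq C}\bigl(W_{C_j}(e)/|C_j|\bigr)$, which is not $\varepsilon\, b(e)$; to get the multiplicative bound you need the two-inequality argument the paper spells out, comparing $\hat b(e)$ against the cluster $C'$ attaining $b(e)$ for the upper bound and against the cluster $C''$ attaining $\hat b(e)$ for the lower bound.
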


\begin{proof}
We first show that, if event $E$ holds, the relative errors
$|(\hat{a}(e)-a(e))/a(e)|$  and $|(\hat{b}(e)-b(e))/b(e)|$ 
are both upper bounded by $\varepsilon$.
For what concerns the former error, 
the bound follows immediately from the definition of $\hat{a}(e)$ and
the relative error bound assumed for $\hat{W}_{C_j}(e)$.
For the latter, consider an arbitrary cluster $C_j$ and let $b_{C_j}(e) =
W_{C_j}(e)/|C_j|$ and $\hat{b}_{C_j}(e) = \hat{W}_{C_j}(e)/|C_j|$. 
Arguing as for the previous bound, we can easly show that 
$|(\hat{b}_{C_j}(e)-b_{C_j}(e))/b_{C_j}(e)| \leq \varepsilon$.
Recall that $b(e) = \min_{C_j \neq C} b_{C_j}(e)$ and that
$\hat{b}(e) = \min_{C_j \neq C} \hat{b}_{C_j}(e)$. 
Suppose that $b(e)=b_{C'}(e)$ and $\hat{b}(e)=\hat{b}_{C''}(e)$,
for some, possibly different, clusters $C'$ and $C''$. We have:
\[
(1-\varepsilon)b(e) = (1-\varepsilon)b_{C'}(e) \leq (1-\varepsilon)b_{C''}(e) 
\leq \hat{b}_{C''}(e) = \hat{b}(e)
\]
and
\[
\hat{b}(e) = \hat{b}_{C''}(e) 
\leq \hat{b}_{C'}(e) \leq (1+\varepsilon)b_{C'}(e)= (1+\varepsilon)b(e),
\]
and the desired bound follows.

Now we establish a bound on the relative error for the term appearing
in the denominator of the formula defining $\hat{s}(e)$. Define $m(e)
= \max\{a(e) , b(e)\}$ and $\hat{m}(e) =
\max\{\hat{a}(e),\hat{b}(e)\}$. We show that
$|(\hat{m}(e)-m(e))/m(e)| \leq \varepsilon$.
Suppose that $m(e) = a(e)$, hence $a(e) \geq b(e)$ (the case $m(e) = b(e)$ can be dealt with similarly).  If $\hat{m}(e) = \hat{a}(e)$, the bound
is immediate.
Instead, if $\hat{m}(e) = \hat{b}(e)$, hence $\hat{b}(e) \geq \hat{a}(e)$, the bound follows since
\begin{eqnarray*}
\hat{m}(e) = \hat{b}(e) & \leq &
(1+\varepsilon)b(e) \\
& \leq & (1+\varepsilon)a(e) = (1+\varepsilon)m(e),
\end{eqnarray*}
and 
\begin{eqnarray*}
\hat{m}(e) = \hat{b}(e) & \geq & \hat{a}(e) \\
& \geq & (1-\varepsilon)a(e) \\
& = & (1-\varepsilon)m(e).
\end{eqnarray*}
We are now ready to derive the desired bound on 
$|\hat{s}(e) - s(e)|$. By virtue of the relative error bounds established 
above, we have that
\[
\hat{s}(e) = \frac{\hat{b}(e)-\hat{a}(e)}{\hat{m}(e)} 
\leq
\frac{(1+\varepsilon)b(e)-(1-\varepsilon)a(e)}{(1-\varepsilon)m(e)}.
\]
Straighforward transformations show that
\[
\frac{(1+\varepsilon)b(e)-(1-\varepsilon)a(e)}{(1-\varepsilon)m(e)}
=
s(e)+\frac{2\varepsilon}{1-\varepsilon}
\left(s(e)+\frac{a(e)}{m(e)}\right)
\]
Since $s(e) \leq 1$ and $a(e) \leq m(e)$,
we get
\[
\hat{s}(e) \leq s(e)+\frac{4\varepsilon}{1-\varepsilon}.
\]
In a similar fashion, it follows that
\begin{eqnarray*}
\hat{s}(e) & = & 
\frac{\hat{b}(e)-\hat{a}(e)}{\hat{m}(e)} \\
& \geq & 
\frac{(1-\varepsilon)b(e)-(1+\varepsilon)a(e)}{(1+\varepsilon)m(e)} \\
& = &
s(e)-\frac{2\varepsilon}{1+\varepsilon}
\left(s(e)+\frac{a(e)}{m(e)}\right) \\
& \geq &
s(e)-\frac{4\varepsilon}{1+\varepsilon}
\geq 
s(e)-\frac{4\varepsilon}{1-\varepsilon}.
\end{eqnarray*}
\end{proof}

An upper bound to the absolute error incurred when 
estimating $s(\mathcal{C})$ through the value $\hat{s}(\mathcal{C})$
computed by
\textsc{PPS-Silhouette}, is established in the following theorem,
whose proof is an immediate consequence of the definition
of the two quantities (Equations~\ref{eq:avgShil} and~\ref{eq:avgShilhat}),
and of Theorem~\ref{th:mainerror} and Lemma~\ref{lem:singlesilh}.
\begin{theorem}
\label{th:allsilh}
Let $V$ be a dataset of $n$ elements, and let
$\mathcal{C}$ be a $k$-clustering of $V$. Let $\hat{s}(\mathcal{C})$
be the estimate of the silhouette of the clustering $s(\mathcal{C})$
computed by \textsc{PPS-Silhouette} for given parameters
$\varepsilon$ and $\delta$, with $0 < \varepsilon, \delta < 1$, and
for a suitable choice of constant $c>0$ in the definition of the
sample size. Then,
\[
|\hat{s}(\mathcal{C})-s(\mathcal{C})|
\leq \frac{4\varepsilon}{1-\varepsilon}
\]
with probability at least $1-\delta$.
\end{theorem}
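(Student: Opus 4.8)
The plan is to combine the two preceding results in the natural way: Theorem~\ref{th:mainerror} supplies the only probabilistic ingredient (event $E$), while Lemma~\ref{lem:singlesilh} converts the per-estimator relative-error guarantee into a per-element absolute-error bound. Everything beyond that is a deterministic averaging argument over the $n$ elements. Concretely, I would first invoke Theorem~\ref{th:mainerror} to fix the constant $c$ in the definition of the sample size so that $E$ holds with probability at least $1-\delta$, and then carry out the rest of the argument conditioned on $E$.

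Conditioning on $E$, Lemma~\ref{lem:singlesilh} applies simultaneously to \emph{every} $e \in V$, yielding $|\hat{s}(e)-s(e)| \le 4\varepsilon/(1-\varepsilon)$ for all $n$ elements at once. It is worth stressing that no additional union bound is needed at this stage: event $E$ already encodes the relative-error guarantee for all element/cluster pairs, so the per-element bounds are obtained without consuming any further probability. The clustering-level error is then controlled by pushing the absolute value inside the defining averages (Equations~\ref{eq:avgShil} and~\ref{eq:avgShilhat}):
\[
|\hat{s}(\mathcal{C}) - s(\mathcal{C})|
= \left| \frac{1}{n}\sum_{e \in V}\bigl(\hat{s}(e)-s(e)\bigr) \right|
\le \frac{1}{n}\sum_{e \in V}|\hat{s}(e)-s(e)|
\le \frac{4\varepsilon}{1-\varepsilon},
\]
where the first inequality is the triangle inequality and the second substitutes the per-element bound from Lemma~\ref{lem:singlesilh}. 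Since this deterministic chain of inequalities holds whenever $E$ holds, and $E$ holds with probability at least $1-\delta$, the stated high-probability bound follows.

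I do not expect a genuine obstacle here, since the theorem is effectively a corollary of the two earlier results. The only point that needs a little care is bookkeeping of the failure probability: all of the randomness must be accounted for in Theorem~\ref{th:mainerror}, so that the single budget $\delta$ is not inadvertently double-counted. Both Lemma~\ref{lem:singlesilh} and the averaging step are purely deterministic given $E$ and introduce no extra probabilistic slack, which is precisely what allows the final guarantee to retain the same confidence $1-\delta$ as Theorem~\ref{th:mainerror}.
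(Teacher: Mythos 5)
Your proposal is correct and follows exactly the route the paper intends: the paper states that Theorem~\ref{th:allsilh} is an immediate consequence of Equations~\ref{eq:avgShil} and~\ref{eq:avgShilhat} together with Theorem~\ref{th:mainerror} and Lemma~\ref{lem:singlesilh}, and your conditioning on event $E$ followed by the triangle-inequality averaging is precisely that argument spelled out. Your remark that no further union bound is needed (since $E$ already covers all element/cluster pairs) is an accurate and worthwhile clarification of the bookkeeping.
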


We now analyze the running time of \textsc{PPS-Silhouette}, assuming
that the distance between two points can be computed in constant
time. In Step 1, the running time is dominated by the computation of
the distances between the points of each sufficiently large cluster
$C_j$ and the points that form the initial sample $S^{(0)}_{C_j}$. A
simple application of the Chernoff bound shows that, with high
probability, from each such cluster $C_j$, $\BO{\log (nk/\delta)}$
points are included in $S^j_0$. Thus, Step 1 performs $\BO{n \log
  (nk/\delta)}$ distance computations, altogether.  For what concerns
Step 2, its running time is dominated by the computation of the
distances between all points of $V$ and the points of the union of all
samples. A simple adaptation of the proof
of \cite[Corollary 11]{Chechik2015} and a straightforward application
of the Chernoff bound shows that there are $\BO{kt}$ sample points
overall, with high probability, where $t$ is the expected sample size for each cluster. Recalling that $t = \lceil (c/2\varepsilon^2)\ln{(4nk/\delta)} \rceil$,
we have that Step 2 performs 
$\BO{nkt} =  \BO{(nk\varepsilon^{-2}) \log (nk/\delta)}$
distance computations overall. As a consequence, the running time of the algorithm is $\BO{nk\varepsilon^{-2}\log (nk/\delta)}$ which, for reasonable values of $k$, $\varepsilon$ and $\delta$, is a substantial improvement compared to the quadratic complexity of the exact computation.  

We remark that while the $\varepsilon^{-2}$ factor in the upper bound on the sample size might result in very large samples to achieve high accuracy, this quadratic dependence on $\varepsilon$ reflects a worst-case scenario and is needed to obtain the analytical bound. In fact, Section~\ref{sec:exp} will provide experimental evidence that, in practice, rather modest sample sizes are sufficient to achieve high levels of accuracy.

\subsection{Generalization To Other Measures}
\label{sec:gen}
The PPS-based sampling strategy adopted for approximating the
silhouette of a clustering can be applied to other measures used for
internal clustering evaluation, which are based on sums of distances
between points of the dataset. This is the case, for instance, of
measures that compare the \emph{cohesion} (i.e., the average
intracluster distance) and the \emph{separation} (i.e., the average
intercluster distance) of the clustering \cite{TanSK06}.

More precisely, consider a $k$-clustering $\mathcal{C} = \{C_1, \dots
C_k\}$ and define the cohesion and separation of $\mathcal{C}$ as
\begin{eqnarray*} 
\mbox{Coh}(\mathcal{C}) & = &
\frac{1}{2} 
\frac{\sum_{j=1}^k \sum_{e',e'' \in C_j} d(e',e'')}
{\sum_{j=1}^k {|C_j| \choose 2}} \\ 
\mbox{Sep}(\mathcal{C}) & = &
\frac{\sum_{1 \leq j_1 < j_2 \leq k}\sum_{e' \in C_{j_1}}\sum_{e'' \in C_{j_2}} 
d(e',e'')}
{\sum_{1 \leq j_1 < j_2 \leq k} (|C_{j-1}||C_{j_2}|)},  
\end{eqnarray*}
respectively. (These measures have been also employed to assess the
average cluster reliability for a clustering of a network where
distances correspond to connection probabilities
\cite{LiuJAS12,CeccarelloFPPV17}.)  We can rewrite the above measures
in terms of the sums $W_{C}(e)$ defined in the previous section, as
follows
\begin{eqnarray*}
\mbox{Coh}(\mathcal{C}) & = &
\frac{1}{2}
\frac{\sum_{j=1}^k \sum_{e \in C_j} W_{C_j}(e)}
{\sum_{j=1}^k {|C_j| \choose 2}} \\      
\mbox{Sep}(\mathcal{C}) & = &
\frac{\sum_{1 \leq j_1 < j_2 \leq k}\sum_{e \in C_{j_1}} W_{C_{j_2}}(e)}
{\sum_{1 \leq j_1 < j_2 \leq k} (|C_{j_1}||C_{j_2}|)} \\   
\end{eqnarray*}
Clearly, approximations of the $W_{C}(e)$'s with low relative errors
immediately yield approximations with low relative error for
$\mbox{Coh}(\mathcal{C})$ and $\mbox{Sep}(\mathcal{C})$. Specifically,
define $\widehat{\rm Coh}(\mathcal{C})$ and $\widehat{\rm
  Sep}(\mathcal{C})$ as the respective approximations to
$\mbox{Coh}(\mathcal{C})$ and $\mbox{Sep}(\mathcal{C})$ obtained by
substituting each $W_{C}(e)$ occurring in the above equations 
with the value $\hat{W}_{C}(e)$ computed within
\textsc{PPS-Silhouette}. The following theorem is an immediate
consequence of Theorem 2.1.
\begin{theorem}
\label{th:cohsep}
Let $V$ be a dataset of $n$ elements, and let
$\mathcal{C}$ be a $k$-clustering of $V$. Let $\widehat{\rm
  Coh}(\mathcal{C})$ and $\widehat{\rm Sep}(\mathcal{C})$ be the
respective approximations to $\mbox{Coh}(\mathcal{C})$ and
$\mbox{Sep}(\mathcal{C})$ based on the values $\hat{W}_{C}(e)$
computed within \textsc{PPS-Silhouette} for given parameters
$\varepsilon$ and $\delta$, with $0 < \varepsilon, \delta < 1$, and
for a suitable choice of constant $c>0$ in the definition of the
sample size. Then,
\begin{eqnarray*}
\left|
\frac{\widehat{\rm Coh}(\mathcal{C}) - {\rm Coh}(\mathcal{C})}
{{\rm Coh}(\mathcal{C})}
\right| & \leq & \varepsilon \\
\left|
\frac{\widehat{\rm Sep}(\mathcal{C}) - {\rm Sep}(\mathcal{C})}
{{\rm Sep}(\mathcal{C})}
\right| & \leq & \varepsilon \\
\end{eqnarray*}
with probability at least $1-\delta$.
\end{theorem}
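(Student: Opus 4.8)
The plan is to condition on the high-probability event $E$ of Theorem~\ref{th:mainerror}, under which every estimate satisfies $(1-\varepsilon)W_{C_j}(e) \le \hat{W}_{C_j}(e) \le (1+\varepsilon)W_{C_j}(e)$ simultaneously for all $e$ and all $C_j$, and then to exploit the fact that both ${\rm Coh}(\mathcal{C})$ and ${\rm Sep}(\mathcal{C})$, in the rewritten forms given above, are ratios whose numerators are plain sums of the non-negative quantities $W_{C_j}(e)$ and whose denominators depend only on the cluster cardinalities, hence are left untouched by the sampling. Since event $E$ occurs with probability at least $1-\delta$, the entire statement will inherit that same confidence, and no additional failure events need to be introduced.

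First I would record the elementary observation that a two-sided relative error bound is preserved under summation of non-negative terms: if $a_i \ge 0$ and $(1-\varepsilon)a_i \le \hat{a}_i \le (1+\varepsilon)a_i$ for every $i$, then $(1-\varepsilon)\sum_i a_i \le \sum_i \hat{a}_i \le (1+\varepsilon)\sum_i a_i$. Applied to the numerator of ${\rm Coh}(\mathcal{C})$, namely $\tfrac{1}{2}\sum_{j=1}^k \sum_{e \in C_j} W_{C_j}(e)$, this places the estimated numerator $\tfrac{1}{2}\sum_{j=1}^k \sum_{e \in C_j} \hat{W}_{C_j}(e)$ within a factor $(1\pm\varepsilon)$ of the true one. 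The same reasoning applies verbatim to the numerator of ${\rm Sep}(\mathcal{C})$, which is likewise a sum of non-negative $W_{C_j}(e)$'s over pairs of distinct clusters.

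Next I would divide through by the respective denominators. Because each denominator, $\sum_{j=1}^k \binom{|C_j|}{2}$ for cohesion and $\sum_{1 \le j_1 < j_2 \le k}|C_{j_1}||C_{j_2}|$ for separation, is a fixed positive constant identical in the exact and the estimated expressions, dividing the numerator bounds by it preserves the relative error exactly. Hence $(1-\varepsilon){\rm Coh}(\mathcal{C}) \le \widehat{\rm Coh}(\mathcal{C}) \le (1+\varepsilon){\rm Coh}(\mathcal{C})$, and analogously for separation, which is precisely the claimed relative error bound of $\varepsilon$.

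I expect essentially no obstacle here: the only point requiring attention is the non-negativity of the $W_{C_j}(e)$'s, which holds because $d$ is a metric and distances are non-negative, and which is exactly what lets the per-term relative bounds aggregate cleanly instead of partially cancelling. The sole probabilistic content is inherited wholesale from Theorem~\ref{th:mainerror}, so the $1-\delta$ confidence transfers directly and no further union bound is needed; this is why the result is an immediate consequence of that theorem.
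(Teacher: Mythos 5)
Your proposal is correct and matches the paper's (implicit) argument exactly: the paper simply asserts the result is an immediate consequence of Theorem~\ref{th:mainerror}, since the numerators are sums of non-negative $W_{C_j}(e)$'s and the denominators are deterministic, and your write-up just spells out that aggregation-of-relative-errors step. No further union bound is needed, as you correctly note, because event $E$ already covers all element--cluster pairs simultaneously.
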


\subsection{Map-Reduce Implementation}
\label{sec:MR}

Throughout the algorithm, a key-value pair will be written as
$(\mbox{\it key} \; | \; \mbox{\it value}_1, \ldots, \mbox{\it
  value}_r)$, where $\langle \mbox{\it value}_1, \ldots, \mbox{\it
  value}_r \rangle$ is an $r$-dimensional value. Consider a
$k$-clustering $\mathcal{C} = \{C_1, \dots, C_k\}$ of a dataset $V =
\{e_1, \dots e_n\}$ of $n$ elements.  We assume that, initially, the
clustering is represented by the following set of key-value pairs:
\[
\{ (i | e_i, j_i) : 1 \leq i \leq n \wedge e_i \in C_{j_i} \}.
\]
We also assume that the values $n$, $k$, $t$ used in
\textsc{PPS-Silhouette}, as well as all $|C_j|$'s, are given to the
program.  Recall that our implementation, outlined in the main paper,
partitions the points of $V$ into $w$ subsets of size $n/w$ each,
where $w \in [0,n-1]$ is a suitable design parameter.  The four rounds
are described in detail below.  As customary in the description of
MapReduce algorithms, we will use the term \emph{reducer} to refer to
the application of a reduce function to a set of pairs with same
key. \\

\noindent
{\bf Round 1:}
\begin{itemize}
\item{\textbf{Map}}:
Map each pair $(i | e_i, j_i)$ into the pair $(i \bmod w |
e_i,j_i,0)$. Also, with probability $(2/|C_{j_i}|)\ln (2k/\delta)$
select  $e_i$ to be part of the initial Poisson sample
$S^{(0)}_{C_{j_i}}$, and produce the $w$ additional pairs $(\ell | e_i,j_i,1)$,
with $0 \leq \ell < w$.
\item{\textbf{Reduce}}:
Let $V_{\ell}$ be the set of elements $e_i$ for which there is 
a pair $(\ell | e_i,j_i,0)$. (Observe that the $V_{\ell}$'s form a
balanced partition of $V$ into $w$ subsets.)
For each pair $(\ell | e_i,j_i,1)$
compute the sum $W_{i,\ell}$ 
of the distances from $e_i$ to all elements of $V_{\ell} \cap C_{j_i}$
and produce the pair $(\ell | e_i,j_i,W_{i,\ell},1)$.
Instead, for each pair $(\ell | e_i,j_i,0)$
produce the pair $(\ell | e_i,j_i,0,0)$.
\end{itemize}
By the analysis (see main paper), we know that, with high
probability, $\BO{k \ln(nk/\delta)}$ points have been selected to be
part of the initial samples $S^{(0)}_{C_{j_i}}$'s (a copy of all these
points, represented by pairs with 1 at the end, exists for each key
$\ell$). Therefore, this round requires local memory $M_L = \BO{n/w+k
  \ln(nk/\delta)}$, with high probability, and aggregate memory $M_A =
\BO{wM_L}$. \\

\noindent
{\bf Round 2:}
\begin{itemize}
\item{\textbf{Map}}: 
Map each pair $(\ell | e_i,j_i,0,0)$ into itself, and
each pair  $(\ell | e_i,j_i,W_{i,\ell},1)$ into
the $w$ pairs $(\ell' | e_i,j_i,W_{i,\ell},1)$,
with $0 \leq \ell' < w$. 
\item{\textbf{Reduce}}: 
Each reducer, corresponding to some key $\ell$, now contains all
the information needed to compute the values $W_C(e)$ for each cluster $C$
and element $e \in S^{(0)}_C$. This in turn implies that the reducer
can also compute all sampling probabilities $p(e_i)$ for $e_i \in V_{\ell}$,
by executing the operations specified in \textsc{PPS-Silhouette}.
The reducer computes these probabilities and produces the pairs $(\ell | e_i,j_i,p(e_i))$, one for each $e_i \in V_{\ell}$.
\end{itemize}
By the observations made at the end of Round 1, this round requires
local memory $M_L = \BO{n/w+wk \log(nk/\delta)}$, with high
probability, and aggregate memory $M_A = \BO{wM_L}$. \\

\noindent
{\bf Round 3:}
\begin{itemize}
\item{\textbf{Map}}: 
Map each pair $(\ell | e_i,j_i,p(e_i))$ into the pair 
$(\ell | e_i,j_i,0)$. Also, with probability $p(e_i)$
select $e_i$ to be part of the Poisson sample
$S_{C_{j_i}}$, and produce the $w$ additional pairs $(\ell' | e_i,j_i,1)$,
with $0 \leq \ell' < w$.
\item{\textbf{Reduce Phase}}: 
Each reducer, corresponding to some key $\ell$, now contains all
the information needed to compute the approximate silhouette
values $\hat{s}(e_i)$ for each element in $e_i \in V_{\ell}$ and then sum them into
$\hat{s}_{\ell}$, producing the single pair $(0 | \hat{s}_{\ell})$.
\end{itemize}
By the analysis (see main paper), we know that,
with high probability, the size of union of the Poisson samples
$S_{C_j}$'s is $\BO{kt}$, with
$t = \lceil c/(2\varepsilon^2) \ln(4nk/\delta) \rceil$.
Therefore, this round requires local memory $M_L =
\BO{n/w+kt} = \BO{n/w+(k/\varepsilon^2)\log (nk/\delta)}$,
with high probability, 
and aggregate memory $M_A = \BO{wM_L}$. \\

\noindent
{\bf Round 4:}
\begin{itemize}
\item{\textbf{Map}}: identity map.
\item{\textbf{Reduce}}: 
The reducer corresponding to key 0 computes and returns 
$\hat{s}_{\cal C} = (1/n)\sum_{\ell = 0}^{w-1} \hat{s}_{\ell}$.
\end{itemize}
This round requires $M_L = M_A = \BO{w}$. 

Overall, the above 4-round MapReduce algorithm requires local memory
$M_L = \BO{n/w+(w+1/\varepsilon^2) k \log(nk/\delta)}$, with high
probability, and aggregate memory $M_A = \BO{wM_L}$. 

\section{Experimental Evaluation}
\label{sec:exp}
We ran extensive experiments with the twofold objective of assessing
the quality and evaluating the performance of our
\textsc{PPS-Silhouette} algorithm.  Concerning quality assessment, we
evaluate the accuracy of the estimate provided by
\textsc{PPS-Silhouette}, and compare it against known heuristic
and exact (specialized) baselines.
 For what concerns performance, we evaluate the scalability
of the MapReduce implementation of \textsc{PPS-Silhouette} and compare
its performance against the one of the aforementioned baselines. 

\subsection{Baselines}
\label{sec:baselines}
We gauge the performance of our \textsc{PPS-Silhouette} algorithm
against five  \emph{baselines}. The first baseline is the exact
computation based on the definition. The second baseline 
is a variation of \textsc{PPS-Silhouette}, where the samples $S_{C_j}$ are
chosen via a uniform Poisson sampling, using the same probability
$p(e)=t/|C_j|$ for each $e \in C_j$.
 The other three baselines implement,
respectively, the simplified silhouette of
\cite{Hruschka2004,Wang2017}, the Frahling-Sohler optimization of the exact computation \cite{FrahlingS08}, and
the optimized exact method available in the Apache Spark library
for squared Euclidean distances
(see Section~\ref{sec:related}) 
Our quality assessment compares \textsc{PPS-Silhouette}
against the approximation baselines, namely 
uniform sampling and simplified silhouette, while our performance evaluation compares \textsc{PPS-Silhouette}
against all baselines except for simplified silhouette, since 
the latter turned out to have low accuracy.

\subsection{Implementation and Environment}
\label{sec:impEnv}
For quality assessment, we devised
Java-based sequential implementations of 
\textsc{PPS-Silhouette}, of the exact computation, based on the definition,
and of the uniform sampling and simplified silhouette baselines.
For performance evaluation, we devised MapReduce
implementations, using the Apache Spark programming framework with 
Java\footnote{\url{https://spark.apache.org}},  
of all baselines except simplified silhouette and
the Apache Spark optimized method, whose code is provided in
the Spark library.
The MapReduce implementation of the  uniform sampling baseline is
patterned after the one of \textsc{PPS-Silhouette}, by only changing the
sampling probabilities. The MapReduce implementation of the exact silhouette
computation aims at minimizing the computational load per worker
by  evenly partitioning the operations among the workers and by providing each worker with an entire copy of the dataset. Finally, the
MapReduce implementation of the Frahling-Sohler optimization 
enforces the optimistic scenario where the heuristics for the
$b(e)$ term, in the computation of $s(e)$, is always successful, in order 
to appreciate the maximum speed-up that can be achieved over the
non-optimized implementation\footnote{All our implementations are available at \url{https://github.com/CalebTheGame/AppxSilhouette}}.


In our experiments we fixed
$\delta=0.1$, resulting in a 90\% confidence on the approximation
guarantee, and we opted to control accuracy by varying
the composite parameter $t$ directly
rather than fixing the independent variables
$\varepsilon$ and $c$ separately (see Section~\ref{sec:analysis}).

The experiments were run on 
CloudVeneto\footnote{\url{http://cloudveneto.it}}, an institutional
cloud, which provided us a cluster of
9 PowerEdge M620 nodes, each with octa-core Intel Xeon E5-2670v2 2.50
GHz and 16 GB of RAM, running Ubuntu 16.04.3 LTS with Hadoop 2.7.4 and
Apache Spark 2.4.4. 

\begin{table*}[h!]
\caption{Maximum and average absolute error of the  silhouette estimates returned by \protect\textsc{PPS-Silhouette} (label ``PPS'') and the uniform sampling algorithm, (label ``uniform'') relative to the $k$-medoid clustering of the synthetic dataset with $n=2\cdot 10^4$ points, $k=2,3 \ldots, 10$, and $t=64, 256$ and $1024$, alongside with absolute error of the simplified silhouette. Values $<10^{-3}$ are denoted by $0^*$.}
{\small
\begin{center}
\begin{tabular}{c|c|cc|cc|cc|cc|cc|cc}
\multirow{4}{*}{\textbf{k}} &\multicolumn{13}{c}{$\textbf{Absolute Error}$}\\
\cline{2-14}
&  &\multicolumn{6}{c|}{\textbf{PPS}}&\multicolumn{6}{c}{\textbf{Uniform}}\\
\cline{3-14}
& \textbf{Simplified} &\multicolumn{2}{c|}{\textbf{$t$=64}}&\multicolumn{2}{|c}{\textbf{$t$=256}}&\multicolumn{2}{|c|}{\textbf{$t$=1024}}&\multicolumn{2}{c|}{\textbf{$t$=64}}
&\multicolumn{2}{|c}{\textbf{$t$=256}}&\multicolumn{2}{|c}{\textbf{$t$=1024}}\\
\cline{3-14}
& \textbf{silhouette} &\textbf{max}&\textbf{avg}&\textbf{max}&\textbf{avg}&\textbf{max}&\textbf{avg}&\textbf{max}&\textbf{avg}&\textbf{max}&\textbf{avg}&\textbf{max}&\textbf{avg}\\
\hline
$2$&.283&.005&.001&.002&$0^*$&.001&$0^*$&.207&145&.206&.143&.199&.116\\
\hline
$3$&.404&.018&.005&.008&.002&.004&.001&.369&.235&.403&.217&.392&.163\\
\hline
$4$&.026&.012&.003&.005&.001&.002&$0^*$&.492&.291&.473&.301&.458&.150\\
\hline
$5$&.561&.015&.005&.008&.002&.004&.001&.174&.404&.487&.322&.438&.159\\
\hline
$6$&.314&.101&.015&.034&.007&.010&.002&.174&.082&.126&.049&.052&.005\\
\hline
$7$&.449&.047&.010&.016&.004&.010&.002&.280&.174&.273&.121&.097&.025\\
\hline
$8$&.321&.084&.017&.028&.006&.008&.002&.168&.075&.149&.349&.013&.002\\
\hline
$9$&.455&.078&.013&.015&.003&.003&$0^*$&.295&.187&.254&.188&.068&.002\\
\hline
$10$&.457&.050&.013&.013&.003&.003&$0^*$&.284&.180&.246&.108&.071&.002\\
\end{tabular}
\label{tab:errorSynth}
\end{center}
}
\end{table*}

\subsection{Datasets}
\label{sec:data}
We considered both synthetic 
and real datasets. Synthetic datasets have been chosen so to contain a
few outlier points, with the intent of making the accurate estimation
of the silhouette more challenging. Specifically, for different values
of $n$ (ranging from tens of thousands to one billion), we generated
$n-10$ points uniformly at random within the sphere of unit radius,
centered at the origin of the 3-dimensional Euclidean space, and $10$
random points on the surface of the concentric sphere of radius
$10^{4}$. For real datasets, we used reduced versions of the
``Covertype" and ``HIGGS"
datasets\footnote{\url{https://archive.ics.uci.edu/ml/datasets/}}.
The former contains 100000 55-dimensional points, corresponding to
tree observations from four areas of the Roosevelt National Forest in
Colorado; the latter contains 500000 7-dimensional points and is used
to train learning algorithms for high-energy Physics experiments (as
in previous works \cite{CeccarelloPP19,MalkomesKCWM15}, only the 7
summary attributes of the original 28 have been retained). For all
datasets, the clusterings used to test the algorithms have been
obtained by applying the $k$-medoids clustering algorithm implemented
in the Java Machine Learning
Library\footnote{\url{https://github.com/AbeelLab/javaml}}
\cite{abeel2009java}, using the Euclidean distance.

\subsection{Quality Assessment}
\label{subsec:qual}
In the first experiment,
we compared the accuracy of the silhouette
estimations returned by our \textsc{PPS-Silhouette} algorithm and by
the uniform sampling algorithm, using synthetic data. In order to make
the computation of the exact value feasible, we considered instances
of the synthetic dataset with a relatively small number of elements
($n=2\times 10^4$) and $k$-clusterings $\mathcal{C}$ with
$k\in\{2,\dots,10\}$. 
As explained in Section~\ref{sec:impEnv}, the accuracy of
\textsc{PPS-Silhouette}, and, clearly, also of the uniform sampling
algorithm, depends on the (expected) sample size $t$. Hence, to test
different levels of accuracy, for each value of $k$
we conveniently varied the value
$t$ directly. We remark that, in general, controlling the sample size $t$
directly, rather than indirectly through the several parameters
contributing to its definition, is more effective in practice since it
affords a better tailoring of the estimation to the available resources.
Specifically, we ran both \textsc{PPS-Silhouette} and the
uniform sampling algorithm with 
$t \in \{64, 128, 256, 512, 1024\}$.
As a measure of accuracy for the 
estimated silhouette $\hat{s}(\mathcal{C})$, we use the absolute error
$|\hat{s}(\mathcal{C}) - s(\mathcal{C})|$.
Table~\ref{tab:errorSynth} reports maximum and average of the absolute error over 100 runs of the two algorithms,
for each configuration of parameters $k$ and a selection of $t$'s
(the results for the other values of $t$, omitted for brevity, are
similar to the ones reported in the table).

The results show that \textsc{PPS-Silhouette} provides a very accurate
approximation to the silhouette already with $t=64$, for which the
average absolute error is at most 0.017 for all values of $k$ and the
maximum value is at most $0.084$ for all cases but $k=6$. The
approximation quickly improves when larger values of $t$ are
considered.
\textsc{PPS-Silhouette}
also features a very low variance, which is $<10^{-3}$ in all
cases. These results show that our \textsc{PPS-Silhouette} algorithm
provides a very accurate approximation of the silhouette even with a
limited number of samples. 

\begin{table*}[h!]
\caption{Comparison between exact silhouette, simplified silhouette, and Monte Carlo estimates obtained by averaging 100 estimates returned by \protect\textsc{PPS-Silhouette} (label ``PPS'') and the uniform sampling algorithm (label ``uniform''), for $k=2, 3, \ldots 10$ and $t=64, 128$ and $1024$. 
The highest value identified by each algorithm (one for each $t$ 
for PPS and uniform) is highlighted in boldface. Observe that the ``correct'' value of $k$ corresponds to the value with highest exact silhouette.}
{\small
\begin{center}
\begin{tabular}{c|c|c|ccc|ccc}
 &  \multicolumn{8}{c}{\textbf{Silhouette}} \\
\cline{2-9}
 \textbf{k}  &\textbf{Exact}&\textbf{Simplified}&\multicolumn{3}{c|}{\textbf{PPS}} & \multicolumn{3}{c}{\textbf{uniform}}\\
\cline{4-9}
 & & & \textbf{t=64}&\textbf{t=256}&\textbf{t=1024}&\textbf{t=64}&\textbf{t=256}&\textbf{t=1024}\\
\hline
2& 0.064&0.347& 0.064& 0.064& 0.064&\textbf{0.192}&\textbf{0.176}&0.074\\
3 & -0.072& 0.332& -0.070&-0.072&-0.072&0.151&0.081& -0.191 \\
4 & \textbf{0.343}&0.369&\textbf{0.343}&\textbf{0.344}&\textbf{0.343}&0.049&0.039& \textbf{0.183}\\
5 & -0.228& 0.333&-0.223&-0.228&-0.226&0.158&0.031&-0.244 \\
6 & 0.034& 0.348&0.022&0.032&0.034&0.121&0.071&0.040 \\
7 & -0.143& 0.306& -0.148&-0.145&-0.143&0.033&-0.030&-0.130\\
8 & 0.059&\textbf{0.380}&0.038&0.055&0.058&0.134&0.094&0.057 \\
9 & -0.087& 0.368& -0.102&-0.091&-0.088&0.102&0.021&-0.087\\
10 & -0.104& 0.353 & -0.116 &-0.017&-0.104&0.073&0.005&-0.103\\
\end{tabular}
\label{tab:expsilh1}
\end{center}
}
\end{table*}

On the other hand,
for $t=64$ the uniform sampling algorithm provides estimates with considerably larger
average and maximum absolute errors for most values of $k$,
with an average error that is is one order of magnitude larger than the average
error of \textsc{PPS-Silhouette}. The variance of the uniform sampling algorithm ($0.05$ on average) is 
also larger than the one of our algorithm.
These results show that the use of PPS
sampling is crucial to obtain good estimates of the silhouette
while keeping the sample size small.

The second column of Table~\ref{tab:errorSynth} shows the error that
occurs when the simplified silhouette is used to approximate the exact
value of the silhouette. Such error is always larger than twice the
maximum error provided by \textsc{PPS-Silhouette} (even for the
smallest values of the sample size $t$) and is approximately $0.36$ on
average, which is not surprising considering the weak relation between
simplified silhouette and the exact value of the silhouette quantified
by~\cite{Wang2017}.  Therefore, the simplified silhouette does not
provide accurate estimates of the exact value of the silhouette.

\begin{table*}[h!]
\caption{Comparison between exact silhouette, simplified silhouette, and maximum and average absolute error of the  silhouette estimates returned by \protect\textsc{PPS-Silhouette} (label ``PPS'') and the uniform sampling algorithm, (label ``uniform''), relative to the $k$-medoid clustering of the real datasets Covertype and HIGGS, with $k=5, 10$ and $t=64, 256$ and $1024$.}
{\small
\begin{center}
\begin{tabular}{c|c|c|c|cc|cc|cc|cc|cc|cc}
\multirow{3}{*}{\textbf{Dataset}}&\multirow{3}{*}{\textbf{k}}&\multicolumn{2}{c|}{\textbf{}}&\multicolumn{12}{c}{\textbf{Absolute error}}\\
\cline{5-16}
&& \multicolumn{2}{c|}{}&\multicolumn{6}{c}{\textbf{PPS}}&\multicolumn{6}{|c}{\textbf{Uniform}}\\
\cline{5-16}
& &  \multicolumn{2}{c|}{\textbf{Silhouette}}&\multicolumn{2}{c|}{\textbf{$t$ = 64}}&\multicolumn{2}{|c}{\textbf{$t$ = 256}}&\multicolumn{2}{|c|}{\textbf{$t$ = 1024}}&\multicolumn{2}{c|}{\textbf{$t$ = 64}}&\multicolumn{2}{|c}{\textbf{$t$ = 256}}&\multicolumn{2}{|c}{\textbf{$t$ = 1024}}\\
\cline{3-4}\cline{5-16}
 & & \textbf{Exact} & \textbf{Simpl.}&\textbf{max}&\textbf{avg}&\textbf{max}&\textbf{avg}&\textbf{max}&\textbf{avg}&\textbf{max}&\textbf{avg}&\textbf{max}&\textbf{avg}&\textbf{max}&\textbf{avg}\\
\hline
\multirow{2}{*}{Covertype}&$5$&0.344&0.474&.084&.016&.042&.006&.007&.002&.108&.019&.027&.007&.014&.003\\
\cline{2-16}
&$10$&0.266&0.394&.120&.023&.025&.006&.014&.002&.160&.031&.040&.009&.014&.002\\
\hline
\multirow{2}{*}{HIGGS}&$5$&0.244&0.368&.114&.016&.024&.006&.008&.003&.110&.017&.045&.008&.019&.003\\
\cline{2-16}
&$10$&0.151&0.289&.107&.028&.047&.010&.020&.004&.155&.033&.081&.012&.019&.005\\
\end{tabular}
\label{tab:errorReal}
\end{center}
}
\end{table*}

In the second experiment, we assessed the impact of relying on
estimated rather than exact values in the most typical scenario of use
of the silhouette, suggested in the original paper
\cite{Rousseeuw1987}, that is, identifying the best granularity $k$
when the same clustering algorithm is applied to a dataset with
different values of $k$. Specifically, we ran $k$-medoids clustering on
the synthetic dataset described above, for all values of $k$ in an
interval $[2, 10]$, and checked whether the best clustering (according
to the exact silhouette) was identified using the silhouette
estimations returned by \textsc{PPS-Silhouette} and by the uniform
sampling algorithm with expected sample size $t$, considering values
of $t \in \{64,256,1024\}$.  Since uniform sampling displays a large
variance affecting its silhouette estimates, we adopted a Monte Carlo
approach, where the average of 100 independent estimates is used in
lieu of a single estimate, to reduce the variance of the estimates. We
also computed the deterministic value of the simplified silhouette
\cite{Hruschka2004,Wang2017} to compare its effectiveness against the
two sampling strategies.  The results are shown in
Table~\ref{tab:expsilh1}, where the exact value of the silhouette is
also shown.

As expected, \textsc{PPS-Silhouette} always identifies the correct
value already for $t=64$, and, for all $k$ and $t$, it provides
extremely accurate approximations. On the other hand, the uniform
sampling algorithm identifies the correct value of $k$ only for
$t=1024$, and, for smaller values of $t$, it provides much weaker
approximations (indeed, $t=512$ is needed merely to hit the correct sign of
the silhouette value for all $k$).  Also, the table shows that the 
simplified silhouette is unable to identify the correct value of
$k$.

\begin{table*}[ht]
\caption{Identification of the value of $k$ yielding the highest silhouette using \protect\textsc{PPS-Silhouette} (label ``PPS'') and the uniform sampling algorithm (label ``uniform''). For each range of $k\in [2,\ell]$, with $\ell=3,\ldots, 10$ and  $t=64, 128, \ldots, 1024$, the table shows the percentage of the runs (over 100) where the correct value was identified.}
\begin{center}
\begin{tabular}{c|cc|cc|cc|cc|cc}
 & \multicolumn{10}{c}{\textbf{Correct choice (percentage)}} \\
\cline{2-11}
\textbf{Range} &\multicolumn{2}{c|}{\textbf{$t$ = 64}}&\multicolumn{2}{c|}{\textbf{$t$ = 128}}&\multicolumn{2}{c|}{\textbf{$t$ = 256}}&\multicolumn{2}{c|}{\textbf{$t$ = 512}}&\multicolumn{2}{c}{\textbf{$t$ = 1024}}\\
\cline{2-11}
 \textbf{of k} &\textbf{PPS}&\textbf{uniform}&\textbf{PPS}&\textbf{uniform}&\textbf{PPS}&\textbf{uniform}&\textbf{PPS}&\textbf{uniform}&\textbf{PPS}&\textbf{uniform}\\
\hline
2-3&100\%&66\%&100\%&56\%&100\%&57\%&100\%&71\%&100\%&85\%\\
2-4&100\%&1\%&100\%&2\%&100\%&3\%&100\%&31\%&100\%&72\%\\
2-5&100\%&0\%&100\%&0\%&100\%&6\%&100\%&25\%&100\%&64\%\\
2-6&100\%&0\%&100\%&2\%&100\%&4\%&100\%&28\%&100\%&50\%\\
2-7&100\%&2\%&100\%&1\%&100\%&5\%&100\%&20\%&100\%&53\%\\
2-8&100\%&3\%&100\%&2\%&100\%&5\%&100\%&13\%&100\%&53\%\\
2-9&100\%&2\%&100\%&1\%&100\%&7\%&100\%&25\%&100\%&53\%\\
2-10&100\%&1\%&100\%&3\%&100\%&5\%&100\%&19\%&100\%&57\%\\
\end{tabular}
\label{tab:expsilh2}
\end{center}
\end{table*}

In third experiment, we compared the accuracy of the silhouette
estimates computed by \textsc{PPS-Silhouette}, uniform sampling, and
simplified silhouette, on the real datasets HIGGS and Covertype,
considering $k=5, 10$ and $t = 64,256, 1024$.
Table~\ref{tab:errorReal} reports the the maximum and average absolute
errors of the silhouette estimates, over 100 runs, together with the
values of the exact silhouette and of the simplified
silhouette. Similarly to the case of the synthetic dataset,
\textsc{PPS-Silhouette} provides very accurate estimates already for
$t=64$, with an average error smaller than $0.03$, and for $t \geq
256$ it features an average error below $0.01$.  The variance is below
$10^{-3}$ in all cases. On these datasets, the uniform sampling
algorithm also provides fairly accurate estimates (with variance
$<10^{-3}$ in all cases). However, for no combination of $k$ and $t$
the average error of the uniform sampling algorithm is lower than the
one of \textsc{PPS-Silhouette}, while its maximum error is slightly
better only in three out of twelve configurations. Also, the
simplified silhouette displays poor accuracy, worse than the one of
\textsc{PPS-Silhouette} even with $t=64$.  Once again, this
experiment confirms the superiority of our \textsc{PPS-Silhouette}
algorithm with respect to the two competitors.


We ran an experiment similar in spirit to the second experiment, to further test the
capability of PPS versus uniform sampling of identifying the best
granularity $k$ when the same clustering algorithm is applied to a
dataset with different values of $k$.  Specifically, we ran $k$-medoids
clustering on the same synthetic dataset of the second experiment, for
all values of $k$ in an interval $[2, \ell]$, with $\ell \geq 3$, and
checked whether the best clustering (according to the exact
silhouette) was identified using the silhouette estimations returned
by \textsc{PPS-Silhouette} and by the uniform sampling algorithm. We
ran the experiment for $\ell = 3,\dots,10$, and expected sample sizes
$t \in \{64,128,256,512,1024\}$. For each combination of $\ell$ and
$t$, we considered 100 trials (corresponding to 100 estimates of the
silhouette by the two algorithms). Table~\ref{tab:expsilh2} displays
the fraction of times the correct value of $k$ was identified by the
two algorithms. Strikingly, \textsc{PPS-Silhouette} identified the
correct value of $k$ \emph{$100$\% of the times, for every combination
  of $k$ and $t$}, confirming the reliability and consistency of the
slihouette estimate even for small $t$ (e.g., $64$).

For what concerns the uniform sampling algorithm,
Table~\ref{tab:expsilh2} shows that, except for the range $k \in
\{2,3\}$ or $t=1024$, the algorithm leads to the wrong choice of $k$
\emph{more than half of the times}. In fact, when $t \leq 256$, the
correct value of $k$ is identified no more than $7 \%$ of the times,
which is less than expected if a random choice among the considered
values of $k$ were made. The results clearly improve with $t=1024$, but
still the wrong value of $k$ is identified at least $15\%$ of the
times, and close to $50\%$ of the times in most cases. These results
provide further evidence of the superiority of PPS sampling.

As mentioned in the main paper, we also conducted an experiment to
assess the accuracy of the estimate returned by
\textsc{PPS-Silhouette} under squared Euclidean distances (which do
not satisfy the triangle inequality). Thanks to the availability of
the efficient Apache Spark routine that computes the exact value with
subquadratic work, for this experiment we were able to consider a
large instance of the synthetic dataset with $n=10^8$ points, together
with two clustering granularities ($k=5, 10$). We ran
\textsc{PPS-Silhouette} with expected sample size $t=64$.  For $k=5$,
\textsc{PPS-Silhouette} provides estimates with an average error
smaller than $0.005$ and a maximum error of $0.008$, while for $k=10$
the average error is smaller than $0.058$ and the maximum error is
$0.069$.  These results show that our \textsc{PPS-Silhouette}
algorithm can be used to provide accurate estimates of the silhouette
with a limited number of samples even for squared Euclidean distances.

Overall, the results show that \textsc{PPS-Silhouette} exhibits a high
level of accuracy even for relatively small values of the expected
sample size $t$. Since the total work performed by
\textsc{PPS-Silhouette} is $O(nkt)$, this implies that reliable
estimates can be obtained in much less than the $\Theta(n^2)$ work
required by the exact computation.

\subsection{Performance Evaluation}
\label{sec:scalab}
In this section, we report the results of several experiments aimed at comparing the distributed performance of the
MapReduce implementation of \textsc{PPS-Silhouette} against the one of
MapReduce implementations of the uniform sampling and the
Frahling-Sohler optimization baselines, and of the optimized Spark
routine for squared distances, on very
large datasets.

In the first experiment, we assessed the scalability of the MapReduce version of
\textsc{PPS-Silhouette}.  For this
experiment, we used four instances of the synthetic dataset (see 
Section~\ref{sec:data}) with $n=10^6, 10^7, 10^8, 10^9$ elements and $k=5$
clusters. We ran
MapReduce implementation of \textsc{PPS-Silhouette} with $t=64$ using
$w=1,2,4,8,16$ workers (each on a single core).  For the
dataset with $n=10^9$ elements, we considered only $w=4, 8, 16$ since for
lower levels of parallelism the local memory available to each core
was not sufficient to process the $n/w$ elements assigned to each
worker.   The results are shown in Figure~\ref{fig:scalab}. We observe that
for the two largest datasets \textsc{PPS-Silhouette} exhibits linear
scalability. For $n=10^7$, the algorithm still exhibits linear
scalability for up to $8$ workers, while there is a limited gain in
using 16 workers. For $n=10^6$, scalability is still
noticeable, however, due to the relatively small size of the dataset, the
behavior is less regular due to the stronger incidence of
communication overheads and caching effects.  Also, for a fixed number
of workers, we observe that the algorithm scales linearly with
the number of elements, in accordance with our theoretical analysis.

In successive experiments, we compared the running times
of the MapReduce implementations of \textsc{PPS-Silhouette} (with $t=64$) and of the baselines.
We compared  \textsc{PPS-Silhouette} with uniform sampling algorithm and times showed that, as
expected, uniform sampling is slightly faster (by $10-20\%$ in all
experiments), due to the absence of the probability calculation
phase. This mild advantage of uniform sampling is however
counterbalanced by its much lower level of accuracy, quantified in the
previous section, for equal sample size. We compared \textsc{PPS-Silhouette} with
the exact computation based on the definition (\textsc{DEF}), and of the exact computation based on the   
Frahling and Sohler optimization (\textsc{FS}), with fixed, maximum degree of parallelism
$w=16$, and using the smallest dataset of the previous experiment, namely the one
with $n=10^6$ elements. While \textsc{DEF} was stopped after 6000 seconds,  \textsc{FS}
completed the execution in 1135 seconds, and \textsc{PPS-Silhouette} in 14 seconds, thus
(more than) two orders of magnitude faster than \textsc{DEF} and \textsc{FS}. Also,
we point out that \textsc{PPS-Silhouette} was able to estimate the silhouette for 
a three orders of magnitude larger dataset ($n=10^9$) in 2433 seconds, which is about twice 
the time required by \textsc{FS} on the dataset with $10^6$ elements. Considering the
accuracy of the estimation provided by \textsc{PPS-Silhouette}, assessed in the previous subsection, 
these comparisons provide evidence of its practicality.



\begin{figure}[h!]
\centering
\includegraphics[width=0.4\textwidth]{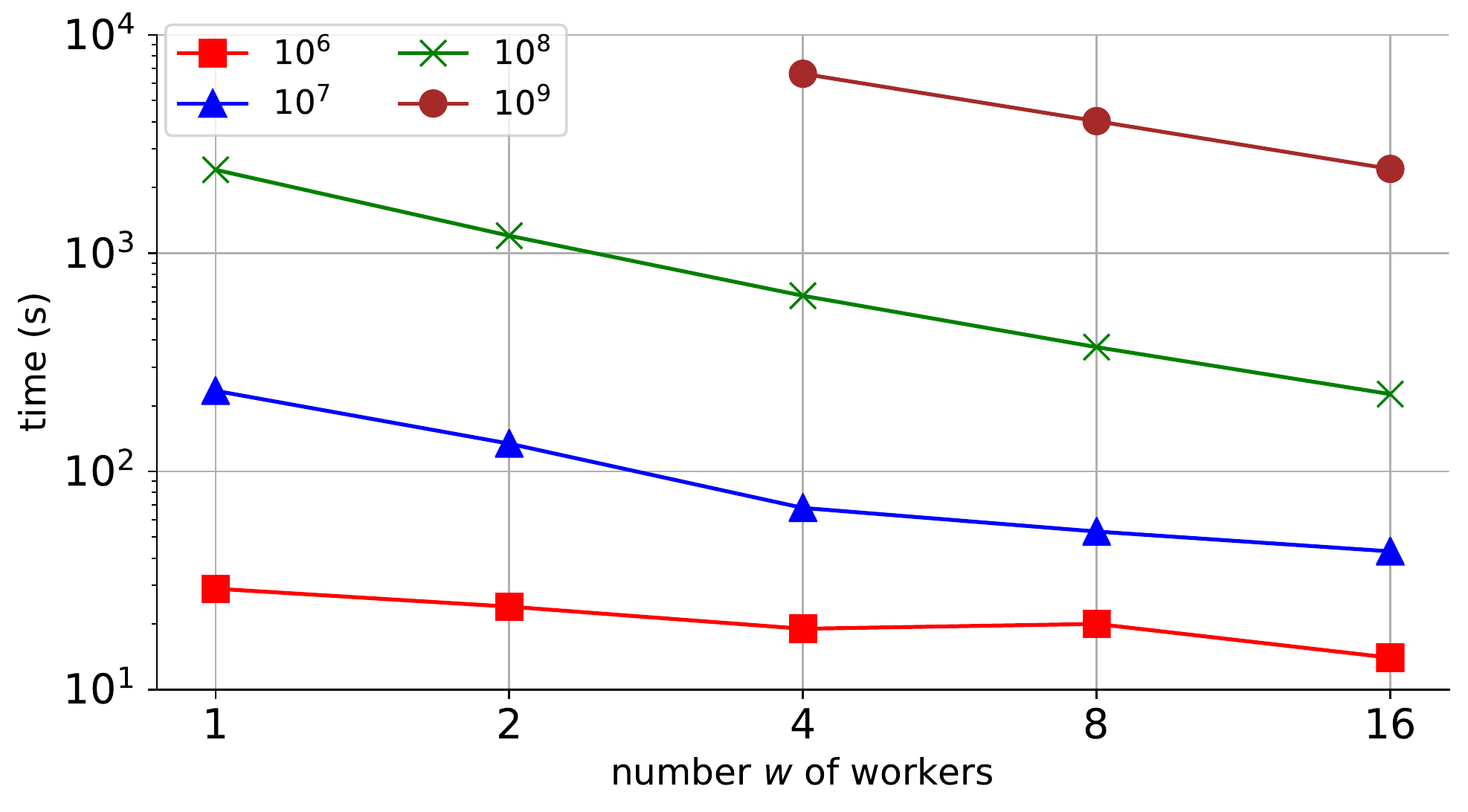}
\caption{Running time (median over 5 runs) of our \textsc{PPS-Silhouette} algorithm as function of the number $w$ of workers, for datasets of different sizes.}
\label{fig:scalab}
\end{figure}

Finally, we compared the performance of \textsc{PPS-Silhouette} under
squared Euclidean distances with the one of the optimized exact method
available in Apache Spark for such distances, using our synthetic
dataset with $n=10^8$ elements and two values of $k$, namely $k=5,
10$. Since our goal in this experiment was to compare the best
performance achievable by the two implementations, and since our
\textsc{PPS-Silhouette} algorithm provides very accurate estimates
already with a limited number $t$ of samples, we fixed $t=32$ and
$w=16$ workers.  For both values of $k$ the estimates from
\textsc{PPS-Silhouette} are very precise (average error $0.007$ for
$k=5$ and $0.087$ for $k=10$), while the running time of
\textsc{PPS-Silhouette} is comparable to, even if higher than (up to
approximately two times), the running time of the optimized Apache
Spark method.


%

\section{Conclusions}
\label{sec:conclusions}

In this work, we developed the first efficient, sampling-based
algorithm to estimate the silhouette coefficient for a give clustering
with provable approximation guarantees, whose running time
dramatically improves over the quadratic complexity required, in the
general case, by the exact computation.  We provided a distributed
implementation of our algorithm in Map-Reduce which runs in constant
rounds, and requires only sublinear space at each worker.  The
experimental evaluation conducted on real and synthetic datasets,
demonstrates that our algorithm enables an accurate estimation of the
silhouette for clusterings of massive datasets for which the exact
computation is out of reach. Future research should target a more
extensive experimentation on real-world datasets.  Also, it would be
interesting to devise a more flexible, dynamic version of the algorithm
where the desired accuracy can be incrementally refined by reusing
(part of) the previously sampled points.

\section{Acknowledgments.}
This work was supported, in part, by MIUR, the Italian Ministry of
Education, University and Research, under PRIN Project n. 20174LF3T8
AHeAD (Efficient Algorithms for HArnessing Networked Data), and grant
L. 232 (Dipartimenti di Eccellenza), and by the University of Padova
under projects ``STARS 2017: Algorithms for Inferential Data Mining"
and ``SID 2020: Resource-Allocation Tradeoffs for Dynamic and Extreme
Data''.  The authors want to thank Matteo Ceccarello for his
invaluable help with the Apache Spark framework.

\end{document}